\def\BibTeX{{\rm B\kern-.05em{\sc i\kern-.025em b}\kern-.08em
    T\kern-.1667em\lower.7ex\hbox{E}\kern-.125emX}}
\pgfplotsset{compat=1.10} 
\newtheorem{theorem}{Theorem}
\newtheorem{lemma}[theorem]{Lemma}
\newtheorem{corollary}{Corollary}
\newtheorem{construction}{Construction}
\newtheorem{observation}{Observation}
\theoremstyle{definition}
\definecolor{britishracinggreen}{rgb}{0.0, 0.26, 0.15}
\crefname{paragraph}{paragraph}{paragraphs}
\Crefname{paragraph}{Paragraph}{Paragraphs}
\newcommand{\define}{\triangleq}
\begin{document}

\title{Scalable and Reliable Over-the-Air\\Federated Edge Learning \\
\author{%
  \IEEEauthorblockN{Maximilian Egger, Christoph Hofmeister, Cem Kaya, Rawad Bitar and Antonia Wachter-Zeh}
  \IEEEauthorblockA{%
                   School of Computation, Information and Technology, Technical University of Munich, Munich, Germany \\\{maximilian.egger, christoph.hofmeister, cem.kaya, rawad.bitar, antonia.wachter-zeh\}@tum.de \vspace{-.7cm}}
    \thanks{This project has received funding from the German Research Foundation (DFG) under Grant Agreement Nos. BI 2492/1-1 and WA 3907/7-1. The paper is submitted to IEEE GLOBECOM 2024.}
}}

\maketitle

\begin{abstract}
Federated edge learning (FEEL) has emerged as a core paradigm for large-scale optimization. However, FEEL still suffers from a communication bottleneck due to the transmission of high-dimensional model updates from the clients to the federator. Over-the-air computation (AirComp) leverages the additive property of multiple-access channels by aggregating the clients' updates over the channel to save communication resources. While analog uncoded transmission can benefit from the increased signal-to-noise ratio (SNR) due to the simultaneous transmission of many clients, potential errors may severely harm the learning process for small SNRs. To alleviate this problem, channel coding approaches were recently proposed for AirComp in FEEL. However, their error-correction capability degrades with an increasing number of clients. We propose a digital lattice-based code construction with constant error-correction capabilities in the number of clients, and compare to nested-lattice codes, well-known for their optimal rate and power efficiency in the point-to-point AWGN channel.
\end{abstract}

\begin{IEEEkeywords}
Channel coding, Federated learning, Multiple-access channel, Nested lattices, Over-the-air computation
\end{IEEEkeywords}

\section{Introduction}

In federated edge learning (FEEL), clients use their individual data to collaboratively train a machine learning (ML) model centrally orchestrated by a federator, often through the use of iterative optimization processes such as (stochastic) gradient descent \cite{mcmahan2017communication}. While the federator's model can be broadcast to the clients at each iteration, the local updates computed by the clients are usually transmitted over a noisy channel through separation-based multiple access schemes such as Orthogonal Frequency-Division Multiple Access (OFDMA). Hence, each device requires a separate communication resource, making the communication complexity scale linearly with the number of clients. Since the communication complexity is crucial for high-dimensional ML models, many proposed communication-efficient FEEL methods leverage the source-channel separation principle by compressing the model updates or the gradients and assuming a reliable and error-free end-to-end communication, e.g., \cite{deepGC,QSGD,signSGD}.

However, individual updates are not required for FEEL. It suffices for the federator to observe the aggregated local updates. This facilitates over-the-air computation (AirComp) in FEEL, which makes use of signal superposition in wireless multiple-access channels (MACs) \cite{ml_wireless_edge,yang2020federated,zhu2021over}. It is known from \cite{harnessing_interference} that AirComp allows the computation of nomographic functions; %
and aggregations in FEEL are a special case of those. The clients simultaneously\footnote{We assume that the clients can synchronize their transmission to leverage the additive properties of the channel. This synchronization faces its own practical challenges that are out of the scope of this work.} transmit their local updates and the federator decodes the sum, allowing for better transmission rates than separation-based approaches. See Figure~\ref{fig:simple_oac} for a simple illustration of the AirComp setting.
Analog uncoded transmission without error correction is a simple and provably optimal solution for high signal-to-noise ratios (SNRs)  when the number of available channel uses $n$ equals the gradient dimension $k$ \cite{gastpar_uncoded_optimal}. However, it can suffer from significant performance losses, incurring non-negligible errors in the trained machine learning model \cite{yang2022over}. %
In a limited bandwidth regime where $n<k$, each element of the gradient can be quantized to a single bit, and quadrature amplitude modulation (QAM) can be used for digital modulation with majority-vote-based decoding at the federator \cite{obda}. An improved scheme using finer quantization has been proposed in \cite{qiao2023unsourced}. In analog systems, edge devices can sparsify their gradients and project them to a lower dimensional vector before transmission \cite{ml_wireless_edge}. %

Since transmission errors, especially for low SNRs, can significantly harm the learning process, we focus on cases where $n>k$, and hence, error correction is possible. The reliable reconstruction of a function of sources over a MAC was first analyzed in \cite{comp_over_mac} using lattice codes from \cite{analog_match}, proving the suboptimality of source-channel separation. Separation was shown to be exponentially suboptimal as the number of Gaussian sources increases \cite{gastpar_uncoded_optimal}. %
In \cite{lattices_good}, the existence of a lattice sequence both good for source and channel coding is proven. Nested lattices are used in  \cite{nomographic_efficient,compute_and_forward,local_gossip} due to their linear structure and good algebraic properties. %
They can asymptotically achieve the capacity of the point-to-point Additive White Gaussian Noise (AWGN) channel \cite{achieving_capacity}, and the reliable decoding of the sum of lattice codewords allows retrieving the \textit{modulo-sum} of the messages simultaneously sent by $K$ devices. However, determining the sum without wraparound is paramount in FEEL. %

To allow digital modulation, %
the authors of \cite{channelcomp} formulate an optimization problem to find constellations without destructive overlaps and facilitate arbitrary finite function computations. %
In~\cite{balanced_ofdm},  balanced number systems are used in conjunction with the on-off keying of OFDM subcarriers.
In contrast, we will use balanced numbers with lattice coding for error correction purposes. 
Recently, an AirComp FEEL scheme based on OFDM combined with convolutional and LDPC codes was proposed %
\cite{you2023broadband}, whose decoding complexity was reduced using non-binary LDPC codes and nested lattices \cite{xie2023joint}. %
However, to ensure the sum of the received symbols maps to the correct codeword, %
the finite field size of the channel code must scale linearly with $K$, leading to a significant increase in the block error rate with $K$. 
Such a drawback is inherent to most code constructions over finite fields for AirComp in FEEL. 
To mitigate this, we propose two lattice-based code constructions and analyze them regarding their scaling properties in $K$.

\begin{itemize}
    \item We propose in \cref{sec:balanced_numbers} a lattice-based construction for digital transmission %
    achieving bandwidth expansion via a balanced numeral representation. %
    We derive an upper bound on the expected error that is independent of $K$.
    \item For analog transmission, we investigate in \cref{sec:nested_lattices} known nested lattices %
    for finite field addition over the AWGN channel. We give an upper bound on the average error rate and show that the achievable asymptotic rate and the error probability do not scale well in $K$.
    \item We compare the error-correction capabilities and highlight the regimes of benefit. While nested lattices are preferable for small $K$ or very large $n$, lattices with balanced numbers are superior in the regime of interest. %
\end{itemize}

\section{Over-the-Air Federated Learning Model}

For two integers $\tau, \zeta$, with $\tau < \zeta$, let $[\tau]$ denote the set of integers $\{1, \dots \tau\}$, and $[\tau, \zeta]$ denote $\{\tau, \dots \zeta\}$. 
Let $K$ be the number of clients participating in the training process, enumerated by $i \in [K]$. At each iteration $t$, the clients obtain the global model $\theta_t$ from the federator via a reliable channel and compute a $k$-dimensional gradient $\mathbf{g}_i(\theta_t) \in \mathbb{R}^k$ of a pre-defined loss function evaluated at their data $\mathcal{D}_i$. With a learning rate $\eta>0$, the federator updates the global model by averaging the clients' gradients, i.e., $\theta_{t+1} = \theta_t - \eta \sum_{i=1}^K \mathbf{g}_i(\theta_t)$, and the algorithm repeats until certain performance indicators are reached. Let $\mathbb{Z}_q$ denote the ring of intergers $\{0, \dots q-1\}$ modulo $q$. Each client employs a quantizer $Q: \mathbb{R}^k \rightarrow \mathbb{Z}_q^k$ to obtain a discrete representation of the gradient, i.e., $\tilde{\mathbf{g}}_i(\theta_t) = Q(\mathbf{g}_i(\theta_t))$. Using a channel encoder $\phi: \mathbb{Z}_q^k \rightarrow \mathbb{R}^n$, those gradients are mapped to the transmit signal $\mathbf{X}_i(t) = \phi(\tilde{\mathbf{g}}_i(\theta_t))$. We assume the clients have channel state information (CSI) and are perfectly synchronized, and transmit their signals $\mathbf{X}_i(t)$ over an AWGN channel, such that the federator receives $\mathbf{Y}(t) = \sum_{i=1}^K \mathbf{X}_i(t) + \mathbf{N}(t)$, where $\mathbf{N}(t) \sim \mathcal{N}(\boldsymbol{0}, P_N \mathbf{I}_{n \times n})$ is independent and identically distributed (i.i.d.) Gaussian noise with power $\lim_{n\rightarrow \infty} \frac{1}{n} \Vert \mathbf{N}(t) \Vert_2^2 = P_N$. The goal of the federator is to decode the sum of the gradients $\hat{\mathbf{g}}(\theta_t) \define \sum_{i=1}^K \tilde{\mathbf{g}}_i(\theta_t)$ by applying the channel decoder $\phi^{-1}: \mathbb{R}^n \rightarrow \mathbb{R}^k$, and update the model as $\theta_{t+1} = \theta_{t} - \frac{\eta}{K} \hat{\mathbf{g}}(\theta_t)$, with learning rate $\eta$. 
\begin{figure}[!t]
    \centering
    \tikzset{every picture/.style={line width=0.75pt}} %
    \resizebox{\linewidth}{!}{ \input{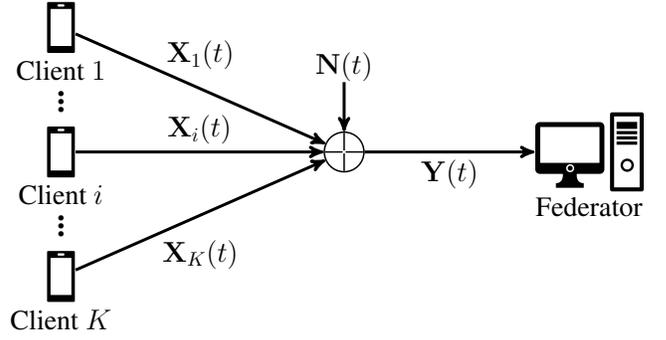} }
    \caption[A simplified illustration of an over-the-air federated learning scheme.]{A simplified illustration of an over-the-air federated learning setting.
    The $K$ edge devices transmit simultaneously to the federator.
    The federator observes the noisy sum.}
    \label{fig:simple_oac}
\end{figure}
To ensure model convergence, we require the gradient estimate to be unbiased, i.e, $\mathbb{E}[\hat{\mathbf{g}}(\theta_t)] = \mathbb{E}\left[\sum_{i=1}^K \mathbf{g}_i(\theta_t)\right]$.
By linearity of expectation, this is achieved by ensuring that $\mathbb{E}\left[\tilde{\mathbf{g}}_i(\theta_t)\right] = \mathbb{E}\left[ \mathbf{g}_i(\theta_t)\right]$ for all $i\in[K]$, and all $t$. In the following, we will be concerned with a single iteration and drop the iteration index $t$ from our notation. 
We use a variant of stochastic rounding~\cite{gupta2015deep,croci2022stochastic}, known to be unbiased. Let $g_{\textrm{min}}$ and $g_{\textrm{max}}$ be the minimum and the maximum value that a gradient can take, respectively. Each entry $g_{i,j}$ of the gradient $\mathbf{g}_i$ is quantized separately according to %
\begin{equation*}\label{eq:qsgd_definition}
    Q_{\textrm{stoc}}(g_{i, j}, q) = \xi\left((q-1) \frac{g_{i, j} - g_\textrm{min}}{g_\textrm{max} - g_\textrm{min}}\right), 
\end{equation*}
with \vspace{-.1cm}
\begin{equation*}\label{eq:}
    \xi(x) =
    \begin{cases}
         \lceil  x \rceil    &  \text{w.p. } \frac{x - \lfloor x \rfloor}{\lceil x \rceil - \lfloor x \rfloor} \\
         \lfloor  x \rfloor    &  \text { otherwise}
    \end{cases}
    .\end{equation*}
Intuitively, we pick $q$ linearly spaced quantization levels and probabilistically round to the quantization level above or below $g_{i,j}$ with probability inversely proportional to the distance to the quantization level. 
The corresponding dequantizer also acts entry-wise and is given by \vspace{-.1cm} \begin{align*}
    Q^{-1}_\textrm{stoc}(x, q) = K g_\textrm{min} + (g_\textrm{max} - g_\textrm{min}) \frac{x}{q-1}.
\end{align*}

It is straightforward to prove the following lemma.
\begin{lemma}
    The stochastic quantizer $Q_{\textrm{stoc}}$ preserves the expectation of the gradient, i.e. 
    \begin{align*}
        \mathbb{E}\Big[Q_\textrm{stoc}^{-1}\Big(\sum_{i \in [K]} Q_{\textrm{stoc}}\big(\mathbf{g}_i(\theta_t)\big)\Big)\Big] = \mathbf{g}(\theta_t),
    \end{align*} where the expectation is over the randomness of the quantizer. 
\end{lemma}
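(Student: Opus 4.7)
The plan is to exploit the fact that both $Q_\textrm{stoc}$ and $Q_\textrm{stoc}^{-1}$ act entry-wise and that stochastic rounding is applied independently across clients and coordinates. Thus it suffices to prove the identity for a single coordinate $j\in[k]$; the full vector statement then follows by reassembling coordinates. The overall structure is to (i) establish the pointwise unbiasedness of the rounding map $\xi$, (ii) lift this to unbiasedness of $Q_\textrm{stoc}$ on each coordinate, and (iii) push the resulting expression through the affine dequantizer and verify that the constant offset $K g_\textrm{min}$ has been engineered to cancel the aggregated shifts.

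For step (i), I would compute directly using the two-point distribution defining $\xi$:
\begin{equation*}
\mathbb{E}[\xi(x)] = \lceil x \rceil \cdot \frac{x-\lfloor x\rfloor}{\lceil x\rceil-\lfloor x\rfloor} + \lfloor x \rfloor \cdot \frac{\lceil x\rceil-x}{\lceil x\rceil-\lfloor x\rfloor} = x.
\end{equation*}
Substituting $x = (q-1)(g_{i,j}-g_\textrm{min})/(g_\textrm{max}-g_\textrm{min})$ yields $\mathbb{E}[Q_\textrm{stoc}(g_{i,j},q)] = (q-1)(g_{i,j}-g_\textrm{min})/(g_\textrm{max}-g_\textrm{min})$. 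Summing over $i\in[K]$ and invoking linearity of expectation across independent client-wise rounding, the expectation of $\sum_i Q_\textrm{stoc}(g_{i,j},q)$ equals $(q-1)\sum_i(g_{i,j}-g_\textrm{min})/(g_\textrm{max}-g_\textrm{min})$. Feeding this into $Q_\textrm{stoc}^{-1}$, which is affine and hence commutes with expectation, produces $K g_\textrm{min} + \sum_i(g_{i,j}-g_\textrm{min}) = \sum_i g_{i,j}$. Reassembling coordinates yields the claimed vector identity with $\mathbf{g}(\theta_t) = \sum_{i=1}^K \mathbf{g}_i(\theta_t)$.

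There is no genuine obstacle here; the only subtlety worth flagging is the bookkeeping in step (iii), namely that the additive constant $K g_\textrm{min}$ in $Q_\textrm{stoc}^{-1}$ is precisely what is needed to absorb the $K$-fold shift accumulated by recentering every client's gradient around $g_\textrm{min}$ before rounding. Once this is observed, the lemma reduces to linearity of expectation applied to the standard unbiasedness of stochastic rounding.
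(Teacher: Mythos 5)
Your proposal is correct and is exactly the direct computation the paper has in mind: the paper omits the proof entirely (calling it straightforward), and the intended argument is precisely your chain of $\mathbb{E}[\xi(x)]=x$, linearity of expectation over clients, and the affine dequantizer whose $Kg_{\textrm{min}}$ offset cancels the accumulated recentering. No gaps worth noting beyond the trivial edge case $x\in\mathbb{Z}$, where $\xi(x)=x$ deterministically.
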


Let $\tilde{g}_{i,j}$ subsequently be the elements of the quantized gradient $\tilde{\mathbf{g}}_i(\theta)$. We will focus on different constructions of the encoder $\phi$, where we rely on the following lattice preliminaries. 
An $n$-dimensional lattice $\Lambda, n \in \mathbb{Z^+}$, is a discrete additive
subgroup of $\mathbb{R}^n$, such that $\forall \mathbf{u},\mathbf{v} \in \Lambda: \mathbf{u} + \mathbf{v}
\in \Lambda$. It is defined by a generator matrix $\mathbf{B} \in \mathbb{R}^{n \times n}$ as
$
    \Lambda = \{ \mathbf{B} \mathbf{v} \mid \mathbf{v} \in \mathbb{Z}^n \}.
$
A \textit{Euclidean nearest neighbor quantizer} $Q_{\Lambda}(.)$ for the lattice $\Lambda$ maps every vector $\mathbf{v} \in \mathbb{R}^n$ to the nearest lattice point in terms of the Euclidean distance, i.e.,
$
    Q_{\Lambda}(\mathbf{v}) = \arg \min_{\boldsymbol{\lambda} \in \Lambda} \Vert\mathbf{v}-\boldsymbol{\lambda}\Vert_2 .
$
The fundamental Voronoi region $\mathcal{V}$ of $\Lambda$ is the set of points in the
$n$-dimensional space whose closest lattice point is the zero word, i.e.,
$
    \mathcal{V} = \{ \boldsymbol{v} \mid Q_{\Lambda}(\mathbf{v}) = \mathbf{0} \}.
$
The Voronoi cell associated with a lattice point $\boldsymbol{\lambda} \in \Lambda$ can be
calculated by shifting the fundamental Voronoi cell $\mathcal{V}$ by
$\mathbf{\lambda}$. The modulo operation with respect to a lattice $\Lambda$ shifts any vector $\mathbf{v} \in \mathbb{R}^n$ into $\mathcal{V}$, i.e.,
$
    \mathbf{v} \textrm{ mod } \Lambda = \mathbf{v} -  Q_{\Lambda} (\mathbf{v}).   
$

\section{Balanced Numeral Lattice Codes} \label{sec:balanced_numbers}

While nested lattices are a viable channel coding solution for AirComp due to their linearity and optimal power usage in point-to-point transmission, i.e., when $K=1$, they suffer from a major weakness: the suboptimal scaling properties in the number $K$ of devices, making this approach infeasible in settings such as large-scale FEEL. We defer the reader to \cref{sec:nested_lattices} for the details.

We propose a simple yet effective lattice construction based on a balanced numeral representation suitable for settings with many devices in the regime of finite block-lengths $n$.  The idea is to represent the quantized gradients $\mathbf{g}_i$ using a balanced number system, which is a natural choice given that gradients can take both negative and positive real values. 

\begin{construction}[Balanced numeral representation]
    Let $\mathbb{S}_{\beta} = \left[-\frac{\beta-1}{2}, \frac{\beta-1}{2}\right]$ be the set of digits for a balanced number system with odd integer base $\beta>1$, and $\alpha>0$ a positive integer. Any integer $z \in \left[-\frac{\beta^\alpha-1}{2},\frac{\beta^\alpha-1}{2} \right]$ can be    represented by $\alpha$ digits $\mathbf{d} = (d_{\alpha-1},\dots,d_0)_{\bar{\beta}}$, $d_\ell \in \mathbb{S}_\beta$ such that\footnote{The bar notation is to avoid confusion between the base-$\beta$ representation and the balanced numeral representation.}
\begin{equation*}\label{eq:}
    z = (d_{\alpha-1},\dots,d_0)_{\bar{\beta}} = \sum_{\ell=0}^{\alpha-1}{d_\ell \beta^\ell}
.\end{equation*}
\end{construction}
Since the summation will result in digits in $\mathbb{Z}$, we allow a relaxed definition of the resulting sum, whose representation is non-unique. Let now $k \vert n$, and let $\alpha \define n/k \in \mathbb{Z}^+$, then for each entry of the gradient $g_{i,j}$ we use the quantization scheme $Q_{\textrm{stoc}}(g_{i,j}, q)$ proposed above with $q \leq \beta^\alpha$, and a symmetric quantization interval $[g_{\textrm{min}}, g_{\textrm{max}}] = [-g_b, g_b]$.

Since all operations are element-wise, let $\mathbf{d}_{i,j} \define (d_{i,j,\alpha-1},\dots,d_{i,j,0})_{\bar{\beta}}$, $i \in [K], j \in [k]$ be the $\alpha$-digit balanced numeral representation for device $i$ and gradient dimension $j$. We define the following encoder $\mathcal{E}(\cdot)$ to construct the transmit signal $\mathbf{X}_i \in \mathbb{R}^{\alpha}$ of
device $i$ as
\begin{equation*}\label{eq:balanced_xi_definition}
    \mathbf{X}_i = \mathcal{E}(\mathbf{d}_{i,j}) = \frac{\sqrt{P_X}}{(\beta-1)/2}\sum_{\ell=1}^{\alpha} \mathbf{e}_{\ell} d_{i,j, \ell-1}
,\end{equation*}
where $\mathbf{e}_{\ell}$ are standard unit vectors in $\alpha$ dimensions. The code is thus defined as $\{\mathcal{E}(\mathbf{d}) \vert \mathbf{d} \in \mathbb{S}_\beta^\alpha\}$. The factor $\frac{\sqrt{P_X}}{(\beta-1)/2}$
scales the transmit signal per channel use such that it never exceeds $P_X$. While imposing assumptions on the distribution of the gradients would enable a more thorough use of the power budget, we conduct our analysis for the general case where gradient distributions are unknown a priori. The transmit signals $\mathbf{X}_i, i \in [K]$, are all part of a scaled $\alpha$-dimensional integer lattice $\mathbb{Z}^{\alpha}$ with generator matrix $\mathbf{B} \define \frac{\sqrt{P_X}}{(\beta-1)/2} (\mathbf{e}_1, \cdots, \mathbf{e}_\alpha)^T$, termed $\Lambda$, with a Voronoi region in the shape of an $\alpha$-dimensional cube.%

The federator observes the noisy sum $\mathbf{Y} = \sum_{i=1}^{K}{\mathbf{X}_i} + \mathbf{N}$. Since all $\mathbf{X}_i$ are members of this lattice, so is their sum $\sum_{i=1}^{K}{\mathbf{X}_i}$. The decoding is equivalent to that of the unconstrained AWGN channel. Hence, the ML decoder matches the lattice decoder; in this case, the Euclidean nearest neighbor quantizer $Q_{\Lambda}(\mathbf{Y})$. With $\hat{\mathbf{d}}_{j}=(\hat{d}_{j,0},\dots,\hat{d}_{j,\alpha-1}) \define \frac{(\beta-1)/2}{\sqrt{P_X}} Q_{\Lambda}(\mathbf{Y})$, the receiver applies the decoder
\begin{equation*}\label{eq:balanced_digit_to_int}
    \hat{g}_j = \mathcal{E}^{-1}(\mathbf{Y}) = \sum_{\ell=0}^{\alpha-1}{\beta^\ell \hat{d}_{j,\ell}},
\end{equation*}
and the inverse quantizer $Q^{-1}_\textrm{stoc}(\hat{g}_j, q)$. 
Note that both the encoder $\mathcal{E}$ and the decoder $\mathcal{E}^{-1}$ are linear.

An error occurs when $g_j \define \sum_{i=1}^K g_{i,j} \neq \sum_{\ell=0}^{\alpha-1}{\beta^\ell \hat{d}_{j,\ell}}$. If the Gaussian
noise vector $\mathbf{N}$ does not leave the fine cubic cell $\mathcal{V}$ of $\Lambda$, the lattice decoder maps back to the correct lattice point. However, if $\mathbf{N}$ leaves this region, the decoder might or might not decode back to the correct sum. If the added noise maps to a zero-valued codeword, the correct result is decoded despite $\mathbf{N}$ leaving $\mathcal{V}$. This is reflected in the following theorem.

\begin{theorem}
    Consider a single dimension of the gradient. Let 
    $N \sim \mathcal{N}(0,P_N)$ and
    \begin{equation*}\label{eq:}    
        \mathbb{D} = 
        \left\{
        \mathbf{\hat{d}}=(\hat{d}_{0},\dots,\hat{d}_{\alpha-1})
        \bigg|
        \sum_{\ell=0}^{\alpha-1}{\beta^\ell \hat{d}_{\ell}} = 0 
        \right\}
    .\end{equation*}
    The dimension-wise error probability with Balanced Numeral codes is 
    \begin{equation*}\label{eq:balanced_prob_err}
        P_e^j 
        = 
        1-\sum_{\mathbf{\hat{d}} \in \mathbb{D}}
        \!\! \left( \prod_{\ell=1}^{\alpha}{\! \Pr\!\left\{ \frac{N (\beta-1)}{\sqrt{P_X}} \! \in \!\! 
            \left[
                2 \hat{d}_{\ell-1}\!-\!1
                ,
                2 \hat{d}_{\ell-1}\!+\!1
            \right]
         \right\}} \! \right).
    \!,\end{equation*}
\end{theorem}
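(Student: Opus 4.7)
The plan is to exploit the cubic shape of the Voronoi region $\mathcal{V}$ of $\Lambda$, which forces ML decoding to decompose into $\alpha$ independent per‑coordinate nearest‑integer quantizations, and then observe that successful decoding requires only that the resulting integer error vector be one of the (possibly non‑unique) representations of zero in base $\beta$.

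First, I would fix a single gradient coordinate $j$ and look at the channel output coordinate by coordinate. Writing $s_{\ell} \define \sum_{i=1}^{K} d_{i,j,\ell-1}$, the sum of transmit signals evaluates on the $\ell$-th axis to $\frac{\sqrt{P_X}}{(\beta-1)/2}\,s_{\ell}$, so $Y_{\ell} = \frac{\sqrt{P_X}}{(\beta-1)/2}\,s_{\ell} + N_{\ell}$ with $N_{\ell}\sim\mathcal{N}(0,P_N)$ i.i.d. Because $\Lambda$ is a scaled integer lattice, $Q_{\Lambda}$ acts as a scalar round on each coordinate, and the rescaling in the definition of $\hat{\mathbf{d}}_j$ gives
\begin{equation*}
    \hat{d}_{j,\ell-1} \;=\; s_{\ell} \;+\; E_{\ell}, \qquad E_{\ell} \define \mathrm{round}\!\left(\tfrac{(\beta-1)/2}{\sqrt{P_X}}\,N_{\ell}\right)\!\in\mathbb{Z}.
\end{equation*}
Since the $N_\ell$ are independent, so are the $E_\ell$.

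Next, I would translate the success event into a condition on $\mathbf{E}=(E_1,\dots,E_\alpha)$. The decoded sum is $\hat{g}_j=\sum_{\ell=0}^{\alpha-1}\beta^{\ell}\hat{d}_{j,\ell} = g_j + \sum_{\ell=1}^{\alpha}\beta^{\ell-1}E_{\ell}$, so decoding is correct iff the integer error vector, read as a string of (not necessarily balanced) digits, represents zero in base $\beta$. This is precisely the set $\mathbb{D}$ in the statement, and crucially it depends neither on the true digits $d_{i,j,\ell}$ nor on the number of clients $K$, which is why the resulting bound will be $K$-independent.

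It then remains to compute $\Pr\{E_\ell=\hat{d}_{\ell-1}\}$. By definition of the scalar rounder, $E_\ell=m$ exactly when $\frac{(\beta-1)/2}{\sqrt{P_X}}N_\ell\in[m-\tfrac12,m+\tfrac12)$, i.e.\ $\frac{(\beta-1)N_\ell}{\sqrt{P_X}}\in[2m-1,2m+1]$, giving the bracketed interval in the theorem. Summing over all $\hat{\mathbf{d}}\in\mathbb{D}$ and using independence of the $E_\ell$ to factor the joint probability yields the claimed expression for $1-P_e^j$, and hence for $P_e^j$.

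The only conceptual subtlety is that $\mathbb{D}$ is infinite and, because $\beta$-ary representations with unrestricted digits are non‑unique, contains non‑zero tuples; these are genuine ``benign'' error events that must be included and that actually improve the success probability. Once this is recognized, the argument is a clean decomposition of a cubic‑lattice ML decoder plus a probability computation for a scalar Gaussian, with no calculations beyond the interval identification above.
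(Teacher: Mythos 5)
Your proposal is correct and follows essentially the same route as the paper's proof: exploit the cubic Voronoi cells of the scaled integer lattice so that ML decoding reduces to independent per-coordinate rounding, observe that correct decoding occurs exactly when the integer noise-rounding vector is a (non-unique) base-$\beta$ representation of zero, and factor the resulting cell probabilities into the per-coordinate intervals $[2\hat{d}_{\ell-1}-1,\,2\hat{d}_{\ell-1}+1]$. Your explicit formulation via the error variables $E_\ell$ is a slightly more detailed rendering of the paper's ``noise decodes to the cubic cell of $\mathcal{E}(\hat{\mathbf{d}})$'' argument, but it is the same decomposition and yields the identical expression.
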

\begin{proof}
    Let $\mathbf{N}$ be the i.i.d. additive Gaussian noise with $\mathbf{N} \sim
    \mathcal{N}(0,P_N \mathbf{I}_{\alpha \times \alpha})$. Consider a member
    $\mathbf{X}$ of the $\alpha$-dimensional cubic lattice, which is decoded as
    the integer $\hat{d}$. %
    For any member $\hat{\mathbf{d}}$ of $\mathbb{D}$, the lattice point $\mathbf{X} + \mathcal{E}(\hat{\mathbf{d}})$ also decodes to $\hat{d}$ since
    $\mathcal{E}(\hat{\mathbf{d}})$ has the decoding result
    $\sum_{\ell=0}^{\alpha-1}{\beta^\ell \hat{d}_{\ell}} = 0$. This is true by
    linearity of the encoder and decoder. %
    Thus, if the Gaussian noise vector
    $\mathbf{N}$ decodes to $0$, we get the correct result. If not, an error
    occurs. The probability that $\mathbf{N}$ decodes to a given vector
    $\mathbf{X}$ is the probability that it is in the fine cubic cell of
    $\mathbf{X}$:
    \begin{equation*}\label{eq:}
        P_e = 1-\sum_{\mathbf{\hat{d}} \in \mathbb{D}}\Pr\{\mathbf{N} \textrm{ is in the cubic cell of } \mathcal{E}(\mathbf{\hat{d}})\}
    \end{equation*}
    Notice that the cubic cell of $\mathbf{\hat{d}}$ is an $n$-dimensional cube
    centered at
    $\mathcal{E}(\mathbf{\hat{d}})=\frac{\sqrt{P_X}}{(\beta-1)/2}\sum_{\ell=1}^{\alpha}
    \mathbf{e}_{\ell} \hat{d}_{\ell-1}$ with edges of length
    $\frac{\sqrt{P_X}}{(\beta-1)/2}$. For $\mathbf{N}$ to be in this cubic cell,
    $N_i, \forall i \in \{0, \cdots, \alpha-1\}$, must be at most $\frac{\sqrt{P_X}}{\beta-1}$
    away from the center of the cell. Then, we may rewrite the equation for $P_e$ as in the statement, where we use the product since the Gaussian noise for each dimension is independent. Since $N_i \sim \mathcal{N}(0,P_N)$ we replace $N_i$ with $N \sim \mathcal{N}(0,P_N)$, which concludes the proof.
\end{proof}
\begin{corollary}\label{cor:balanced_upper_bound}
    With $\beta \leq \lceil \sqrt[\alpha]{q} \rceil$  and ignoring each member of $\mathbb{D}$ except the zero vector and considering all $k$ dimensions of the gradient, we get
    \begin{equation*}\label{eq:balanced_prob_err_upper}
         P_e < 1- \left( 1 - 2 F_{\mathcal{N}(0,P_N)}\left(-\frac{\sqrt{P_X}}{\lceil \sqrt[\alpha]{q}-1\rceil} \right)\right)^{k \alpha}.
    \end{equation*}
\end{corollary}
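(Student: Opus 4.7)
The plan is to start from the per-dimension error expression of the preceding theorem and successively loosen it to obtain the stated bound. First, I would lower-bound the sum $\sum_{\hat{\mathbf{d}}\in\mathbb{D}}\prod_{\ell}\Pr\{\cdot\}$ by retaining only the term corresponding to the zero vector $\hat{\mathbf{d}}=\mathbf{0}\in\mathbb{D}$, which is the simplification the corollary statement explicitly asks for. For that term every factor reduces to $\Pr\{|N|\le \sqrt{P_X}/(\beta-1)\}$, and by symmetry of the Gaussian density this equals $1 - 2F_{\mathcal{N}(0,P_N)}(-\sqrt{P_X}/(\beta-1))$. Hence
$$P_e^j \;\le\; 1 - \left(1 - 2F_{\mathcal{N}(0,P_N)}\!\left(-\frac{\sqrt{P_X}}{\beta-1}\right)\right)^{\alpha}.$$

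Next I would use the hypothesis $\beta \le \lceil\sqrt[\alpha]{q}\rceil$. Because $\lceil x-1\rceil = \lceil x\rceil - 1$ for every real $x$, this gives $\beta-1 \le \lceil\sqrt[\alpha]{q}-1\rceil$. Monotonicity of the Gaussian CDF then implies that replacing $\beta-1$ by the (possibly larger) integer $\lceil\sqrt[\alpha]{q}-1\rceil$ brings the argument of $F_{\mathcal{N}(0,P_N)}(\cdot)$ closer to zero from below, which can only increase the CDF value and thus only decrease the quantity $(1-2F(\cdot))^{\alpha}$. The resulting upper bound on $P_e^j$ therefore remains valid for any admissible $\beta$ and is itself independent of $\beta$.

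Finally, to pass from a single gradient coordinate to all $k$ coordinates, I would invoke the i.i.d.\ property of the additive Gaussian noise over the $n = k\alpha$ channel uses together with the observation that disjoint blocks of $\alpha$ channel uses decode disjoint gradient coordinates. The per-coordinate correctness events are therefore mutually independent, so $1 - P_e = (1 - P_e^j)^{k}$. Combining this with the previous step yields
$$P_e \;<\; 1 - \left(1 - 2F_{\mathcal{N}(0,P_N)}\!\left(-\frac{\sqrt{P_X}}{\lceil\sqrt[\alpha]{q}-1\rceil}\right)\right)^{k\alpha},$$
with strict inequality because we discarded the strictly positive contributions from the non-zero elements of $\mathbb{D}$ in the first step.

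The main obstacle is purely bookkeeping: tracking the direction of every inequality through the three loosening operations (dropping non-zero members of $\mathbb{D}$, inflating $\beta-1$ to $\lceil\sqrt[\alpha]{q}-1\rceil$, and raising the per-dimension success probability to the $k$-th power), together with the minor arithmetic identity $\lceil x-1\rceil = \lceil x\rceil - 1$. No nontrivial estimate is needed; the challenge is only to ensure each monotonicity step goes in the direction that preserves the upper bound on $P_e$.
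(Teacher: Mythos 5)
Your proposal is correct and follows exactly the derivation the paper intends (the corollary is stated without an explicit proof, as an immediate consequence of the theorem): keep only the $\hat{\mathbf{d}}=\mathbf{0}$ term of the sum over $\mathbb{D}$, use $\beta-1\le\lceil\sqrt[\alpha]{q}\rceil-1=\lceil\sqrt[\alpha]{q}-1\rceil$ with monotonicity of the Gaussian CDF, and raise the per-coordinate success probability to the $k$-th power using independence of the noise across the $k$ disjoint blocks of $\alpha$ channel uses. All three monotonicity steps go in the right direction, so nothing is missing.
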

Note that that worst-case error probability tends to zero as $\alpha$ tends to infinity, and is constant in the number of devices $K$.
For increasing block lengths, the probability of error only becomes arbitrarily small if the rate $\frac{1}{\alpha}$ goes to zero, i.e., the scheme does not have an achievable rate in the information-theoretic sense. 
The reason is that the transmitted codewords are contained in a cube, which in turn is contained in a sphere corresponding to the power constraint. For large lattice dimensions $\alpha$, the available power is not used effectively since the cube occupies a small fraction of the sphere.

\section{Nested Lattice Codes}  \label{sec:nested_lattices}

As an alternative to the proposed scheme, the linearity of nested lattices can be exploited to compute the sum of quantized gradients.
A modulo operation on lattices distributes the points generated by a generator matrix $\mathbf{B}$ from a parallelotope to the Voronoi region of a coarser lattice, which for high $n$ approximates a sphere. This leads to a very efficient use of the power budget $P_X$.
However, we will show that due to the modulo operation, sums of codewords, like codewords themselves, lie in  the fundamental Voronoi region $\mathcal{V}$, which limits the number of available input codewords.

To be ``good'' for channel coding, the points of a lattice must be sufficiently far apart.
In addition, such lattices must fulfill a series of properties regarding \textit{packing}, \textit{covering}, and \textit{quantization}. It is known that there exists a sequence of $n$-dimensional lattices $\Lambda_n$ with these properties as $n \rightarrow \infty$ \cite{lattices_good}. We briefly introduce the properties needed for our analysis. 
Let $\mathcal{B}_n(r) \define \{\mathbf{v} \in \mathbb{R}^n \mid \Vert\mathbf{v}\Vert_2 \leq r \}$ be the $n$-dimensional closed ball of radius $r$ centered at the origin. For two sets $\mathcal{X}$ and $\mathcal{Y}$, let $\mathcal{X}+\mathcal{Y}$ denote the sumset $\mathcal{X}+\mathcal{Y} \define \{x+y | x\in \mathcal{X}, y \in \mathcal{Y}\}$. The set $\Lambda + \mathcal{B}_n(r)$ is a \textit{packing} in Euclidean space if for all points $\mathbf{v},\mathbf{w} \in \Lambda, \mathbf{v} \neq \mathbf{w}$, we have $\left(\mathbf{v} + \mathcal{B}_n(r)\right) \cap \left(\mathbf{w} + \mathcal{B}_n(r)\right) = \emptyset$, i.e., two spheres of radius r centered at $\mathbf{v}$ and $\mathbf{w}$ do not intersect. The packing radius is then defined as 
    $r_{\Lambda}^{\textrm{pack}} = \sup \{ r \mid \Lambda + \mathcal{B}_n(r) \textrm{ is a packing.}\}$.

The set $\Lambda + \mathcal{B}_n(r)$ is a covering of Euclidean space if $\mathbb{R}^n \subseteq \Lambda + \mathcal{B}_n(r)$, 
which means each point in n-dimensional space is covered by at least one sphere of radius $r$ centered around a lattice point. The covering radius $r_{\Lambda}^{\textrm{cov}}$ is defined as the smallest such $r$, i.e.,
    $r_{\Lambda}^{\textrm{cov}} = \inf \{ r \mid  \Lambda + \mathcal{B}_n(r) \textrm{ is a covering}\}$. %
The effective radius $r_{\Lambda}^{\textrm{effec}}$ is defined by $\textrm{Vol}(\mathcal{B}_n(r_{\Lambda}^{\textrm{effec}})) = \textrm{Vol}(\mathcal{V})$.
The second moment $\sigma^2$ of a lattice $\Lambda$ is defined as $\sigma^2 (\Lambda) = \frac{1}{\textrm{Vol}(\mathcal{V})} \cdot \frac{1}{n} \int_{\mathcal{V}} \Vert\mathbf{v}\Vert_2^2 d\mathbf{v}$. The \textit{normalized} second moment is then defined as  $G(\Lambda) = 
\frac{\sigma^2(\Lambda)}{{\textrm{Vol}(\mathcal{V})}^{2/n}}$, also known as the MSE distortion measure. 
The normalized second moment of an $n$-dimensional ball is denoted by $G_n^{\star}$.

A nested lattice code $(\Lambda_1,\Lambda)$, with $\Lambda \subset \Lambda_1$, is described by a coarse lattice $\Lambda$ for shaping, and a fine lattice $\Lambda_1$ for coding. The fundamental Voronoi regions of $\Lambda_1$ and $\Lambda$ are denoted by $\mathcal{V}_1$ and $\mathcal{V}$ respectively. By construction, $\textrm{Vol}(\mathcal{V}_1)$ divides $\textrm{Vol}(\mathcal{V})$. The points in the set $\mathcal{C} = \{ \Lambda_1 \textrm{ mod } \Lambda \} = \{\Lambda_1 \cap \mathcal{V}\}$ are the codewords of the nested lattice code. The rate in bits per channel use
is then defined as
\begin{equation*}\label{eq:nested_lattice_rate}
    \mathcal{R} = \frac{1}{n}\log_2 \lvert \mathcal{C} \rvert = \frac{1}{n} \log_2  \lvert \Lambda_1/\Lambda \rvert = \frac{1}{n} \log_2 \left\lvert \frac{\textrm{Vol}(\mathcal{V})}{\textrm{Vol}(\mathcal{V}_1)} \right\rvert
.\end{equation*}
On a high level, the coarse lattice $\Lambda$ ``matches'' the bandwidth of the source to the channel and attains the average power constraint. The fine lattice $\Lambda_1$ distributes the codewords in $n$-dimensional Euclidean space such that codewords are sufficiently ``far'' apart.

It is well-known that in point-to-point communication, nested lattices can, with a slight modification, achieve the capacity $\frac{1}{2} \log_2(1+\text{SNR})$ of the AWGN channel \cite{achieving_capacity}. This is done through the modulo-lattice additive noise (MLAN) channel transformation \cite{erez2005capacity,forney2000sphere,achieving_capacity}, which we generalize 
to multiple devices and the corresponding nested lattices construction following the agenda in \cite{compute_and_forward}.
The scheme we present in this section can be seen as a special case of the relay computation scheme of \cite{compute_and_forward}, where channel state information is available and the interest is purely in the sum of messages instead of a polynomial equation. 
Similar to the MLAN transform in \cite{achieving_capacity}, given a codeword $\mathbf{c}_i \in \mathcal{V}$, device $i$ transmits $\mathbf{X}_i = [ \mathbf{c}_i - \mathbf{U}_i ] \textrm{ mod } \Lambda$, where $\mathbf{U}_i$ are dithers chosen independently and uniformly over $\mathcal{V}$ using shared randomness between transmitters and the receiver. The input alphabet is the fundamental Voronoi region of the lattice $\Lambda$, called the \textit{shaping lattice}. %
The receiver observes the noisy sum $\sum_{i=1}^{K}\mathbf{X}_i + \mathbf{N}$. The observation is multiplied by the MMSE estimator $\alpha$, to be specified below, and the dither random variables $\mathbf{U}_i, i\in \{1,2,\dots,K\}$ are added to the scaled observation. The result is reduced modulo $\Lambda$. We use Loelinger's Construction A \cite{loeliger} to obtain a random nested lattice code $(\Lambda_1,\Lambda)$. Construction A is based on a random linear code over a field $\mathbb{F}_p$ and as a result, the code construction is homomorphic with addition over $\mathbb{F}_p$.
In order to emulate the integer addition, which we require to aggregate quantized gradients, we have to use a prime field $\mathbb{F}_p$ large enough to represent all possible sums of the $K$ devices quantized gradients. This is common to all coding-theoretic approaches over finite fields.

\begin{observation}\label{lem:p_min}
    The prime $p$ must be chosen such that $p > K(q-1)$ to use nested lattice codes for over-the-air computation where $K$ is the number of devices, and $q$ is the quantization parameter.
\end{observation}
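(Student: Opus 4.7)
The plan is to pinpoint exactly where the prime $p$ enters the decoding chain and then count the number of distinct integer values the decoder must be able to distinguish. First, I would recall from the construction that each client's quantized gradient entry $\tilde{g}_{i,j}$ lies in $\mathbb{Z}_q = \{0,1,\dots,q-1\}$, so the true integer sum that the federator wants to recover in each coordinate, $\sum_{i=1}^{K} \tilde{g}_{i,j}$, lies in the set $\{0,1,\dots,K(q-1)\}$, which has cardinality $K(q-1)+1$.

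Next I would use the key structural property of Loeliger's Construction~A recalled in the paragraph preceding the observation: the nested lattice code $(\Lambda_1,\Lambda)$ is built from a linear code over $\mathbb{F}_p$, so the quotient $\Lambda_1/\Lambda$ inherits a group structure isomorphic to the additive group of the underlying $\mathbb{F}_p$-code. Consequently, the MLAN-style decoding of a sum of codewords followed by reduction modulo $\Lambda$ recovers the per-coordinate sum of messages only modulo $p$, not over the integers. Thus, two distinct integer sums $s,s' \in \{0,\dots,K(q-1)\}$ with $s \equiv s' \pmod{p}$ are indistinguishable at the federator.

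To guarantee that the integer sum is uniquely recoverable from its residue class mod $p$, the map from the integer sum to its residue must be injective on $\{0,1,\dots,K(q-1)\}$. This requires the interval length to be strictly smaller than $p$, i.e.\ $K(q-1) < p$, which is exactly the stated condition $p > K(q-1)$. I would conclude by noting that this condition is also sufficient, since under $p > K(q-1)$ the $K(q-1)+1$ possible integer sums occupy distinct residues in $\mathbb{F}_p$ and can therefore be lifted back to the integers without ambiguity.

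The argument is essentially a counting/injectivity observation, so there is no serious obstacle; the only subtle point, which I would state explicitly, is that the homomorphism furnished by Construction~A is with respect to addition in $\mathbb{F}_p$ rather than in $\mathbb{Z}$, so the lower bound on $p$ cannot be relaxed without modifying either the quantizer range or the nested-lattice construction itself.
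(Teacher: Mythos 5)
Your proposal is correct and matches the reasoning the paper gives around this observation: since Construction~A is homomorphic only with respect to addition in $\mathbb{F}_p$, the decoder recovers the coordinate-wise sum modulo $p$, and the $K(q-1)+1$ possible integer sums in $\{0,\dots,K(q-1)\}$ must map to distinct residues, forcing $p > K(q-1)$. The paper treats this as a remark without a formal proof, and your counting/injectivity argument is exactly the intended justification.
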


We state the achievable rate of the lattice code construction in the following theorem, which can be seen as a special case of~\cite[Theorems~4~and ~5]{compute_and_forward}. Let $\log_2^+(x) = \max\{\log_2(x), 0\}$.

\begin{theorem}\label{th:achievable_rate_nested}
    For any $\epsilon>0$ and sufficiently large n, there exist nested lattice codes
    $(\Lambda_1,\Lambda)$ such that the decoder can decode the sum of quantized
    gradients with average probability of error $\epsilon$ as long as
    \begin{equation*}\label{eq:nested_achievable_rate_alpha}
        \mathcal{R} < \frac{1}{2} \log_{2}^{+}\left( \frac{P_X}{\alpha^2 P_N + K (1-\alpha)^2 P_X} \right)
        ,\end{equation*}
    as $n \rightarrow \infty$. Choosing $\alpha = \frac{K P_X}{P_N + K P_X}$, we get
    \begin{equation*}\label{eq:nested_achievable_rate_noalpha}
        \mathcal{R} < \frac{1}{2} \log_{2}^{+}\left( \frac{P_X}{P_N} + \frac{1}{K} \right).
    \end{equation*}
\end{theorem}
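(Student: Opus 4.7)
The plan is to apply the modulo-lattice additive noise (MLAN) transformation generalized from the point-to-point setting to the $K$-user multiple-access channel, as developed in \cite{compute_and_forward}. Since the statement is a specialization of \cite[Theorems~4--5]{compute_and_forward}, the objective is to (i) unwind the protocol for our setting where the desired linear combination is the integer sum of the quantized gradients, and (ii) identify the resulting effective noise variance, after which the existence of good lattice sequences yields the rate claim.

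First I would describe the transmission and receiver processing. Each device sends $\mathbf{X}_i = [\mathbf{c}_i - \mathbf{U}_i] \bmod \Lambda$, where the dither $\mathbf{U}_i$ is uniform over $\mathcal{V}$ and shared with the federator. By the Crypto Lemma, $\mathbf{X}_i$ is uniform over $\mathcal{V}$, independent of $\mathbf{c}_i$, and has per-dimension power $\sigma^2(\Lambda) = P_X$ by the choice of shaping lattice. The federator observes $\mathbf{Y} = \sum_i \mathbf{X}_i + \mathbf{N}$ and forms
$$\mathbf{Y}' = \Bigl[\alpha\,\mathbf{Y} + \sum_{i=1}^K \mathbf{U}_i\Bigr] \bmod \Lambda = \Bigl[\sum_{i=1}^K \mathbf{c}_i + \mathbf{N}_{\mathrm{eff}}\Bigr] \bmod \Lambda,$$
with effective noise $\mathbf{N}_{\mathrm{eff}} = (\alpha - 1)\sum_i \mathbf{X}_i + \alpha\,\mathbf{N}$. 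Here I use the distributive property of the modulo-$\Lambda$ operation together with the fact that $\sum_i \mathbf{c}_i \bmod \Lambda \in \mathcal{C}$ by the additive group structure induced by Construction~A. Combined with Observation~\ref{lem:p_min} (i.e., $p > K(q-1)$), the modulo-$\Lambda$ sum of codewords equals the integer sum of the quantized gradients without wrap-around.

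Second I would compute the variance of $\mathbf{N}_{\mathrm{eff}}$. The mutual independence of the dithers and of the channel noise gives per-dimension variance $\alpha^2 P_N + K(1-\alpha)^2 P_X$. Applying the standard Poltyrev-style analysis for nested lattice decoding, together with the existence of lattice sequences simultaneously good for covering, quantization, and AWGN channel coding from \cite{lattices_good,erez2005capacity,compute_and_forward}, one concludes that lattice decoding of $\mathbf{Y}'$ recovers $\sum_i \mathbf{c}_i \bmod \Lambda$ with error probability below any $\epsilon > 0$ provided $\mathcal{R} < \tfrac{1}{2}\log_2^+\!\bigl(P_X / (\alpha^2 P_N + K(1-\alpha)^2 P_X)\bigr)$ and $n$ is sufficiently large, which is the first claim. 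Then I would optimize $\alpha$: minimizing $\alpha^2 P_N + K(1-\alpha)^2 P_X$ gives the MMSE choice $\alpha^\star = \tfrac{K P_X}{P_N + K P_X}$, and substitution yields
$$\alpha^{\star 2} P_N + K(1-\alpha^\star)^2 P_X = \frac{K P_X P_N}{P_N + K P_X},$$
so that $\tfrac{P_X}{\alpha^{\star 2} P_N + K(1-\alpha^\star)^2 P_X} = \tfrac{P_X}{P_N} + \tfrac{1}{K}$, which gives the second form.

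The main obstacle is the existence argument: one must invoke that there are sequences of nested lattices $(\Lambda_1,\Lambda)$ simultaneously good for quantization (so that the shaping lattice meets the average power constraint and $G(\Lambda) \to G_n^\star$) and good for AWGN channel coding (so that the fine lattice decodes reliably against the non-Gaussian effective noise $\mathbf{N}_{\mathrm{eff}}$, via the usual information-spectrum argument that lifts the analysis from a true Gaussian of the same variance to $\mathbf{N}_{\mathrm{eff}}$). For our specialization this existence is provided by \cite{lattices_good} together with the compute-and-forward construction of \cite{compute_and_forward}, so once the effective noise has been identified the proof reduces to invoking those results.
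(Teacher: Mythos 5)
Your proposal is correct and follows essentially the same route as the paper, which likewise obtains the theorem by specializing the compute-and-forward argument of \cite[Theorems~4~and~5]{compute_and_forward}: dithered transmission, the MLAN-type transformation yielding effective noise $(\alpha-1)\sum_i\mathbf{X}_i+\alpha\mathbf{N}$ of per-dimension variance $\alpha^2 P_N + K(1-\alpha)^2 P_X$, existence of good nested lattice sequences, and the MMSE choice of $\alpha$. Your algebra for the MMSE substitution and the use of Observation~\ref{lem:p_min} to rule out wrap-around match the paper's treatment.
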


For $K=1$, nested lattices achieve the capacity of the point-to-point AWGN channel and are, therefore, optimal.
However, for an increasing number of devices $K$, the achievable rate does not scale with the total transmit power $K P_X$ but decreases slightly.
Intuitively, the lack of increase is due to the fact that the field $\mathbb{F}_p$ needs to be able to represent all possible sums of quantized gradient values.
Accordingly, all codewords as well as all sums of codewords need to be able to fit within the fundamental Voronoi region of the shaping lattice, which corresponds to the power constraint on a single device, rather than the combined power of all devices.
The rate decreases slightly due to the noise introduced by the dithers.

In the following, we bound the error probability for the nested lattice construction from above.

\begin{theorem}\label{thm:upper_bound_nested}
    The average probability of error for the nested lattice code construction is upper bounded by
    \begin{equation*}
        P_e < \min\left(e^{K \epsilon_1(\Lambda)n} \left(1 - F_{\chi^2(n)} \left(\frac{{r_\Lambda^{\textrm{pack}}}^2}{p^2 P_{Z_{eq}}}\right)\right),1\right),
    \end{equation*}
    where \vspace{-.2cm}
    \begin{equation*}
        \epsilon_1(\Lambda) = \log\left(\frac{r_\Lambda^{\textrm{cov}}}{r_\Lambda^{\textrm{effec}}}\right) + \frac{1}{2} \log(2 \pi e G_n^\star) + \frac{1}{n},
    \end{equation*}
    and \vspace{-.2cm}
    \begin{equation*}\label{eq:}
        P_{Z_{eq}} = \alpha^2 P_N + K (1-\alpha)^2 \left(\frac{r_\Lambda^{cov}}{r_\Lambda^{\textrm{effec}}}\right)^2 P_X
        .\end{equation*}
\end{theorem}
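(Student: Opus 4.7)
The plan is to reduce the federator's decoding problem via the modulo-lattice additive noise (MLAN) transform to detecting a single fine-lattice point in codeword-independent noise, and then to control the tail of that effective noise by comparison with a matched-variance Gaussian.

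First I would apply the MLAN transform. After the federator scales its received signal by the MMSE coefficient $\alpha$, adds back the dithers $\sum_i \mathbf{U}_i$, and reduces modulo $\Lambda$, the Crypto Lemma yields an effective observation $\bigl[\sum_i \mathbf{c}_i + \mathbf{Z}_{\textrm{eq}}\bigr] \bmod \Lambda$, with $\mathbf{Z}_{\textrm{eq}} = \alpha \mathbf{N} - (1-\alpha)\sum_{i=1}^K \tilde{\mathbf{U}}_i$, where the $\tilde{\mathbf{U}}_i$ are i.i.d.\ uniform on $\mathcal{V}$ and $\mathbf{Z}_{\textrm{eq}}$ is independent of the codewords. \cref{lem:p_min} ensures $p > K(q-1)$, so the coset $[\sum_i \mathbf{c}_i]\bmod\Lambda$ is in bijection with the integer sum of the quantized gradients, and correctly identifying the coset is equivalent to correctly decoding the aggregate.

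Next I would bound the error event. Fine-lattice decoding succeeds exactly when $\mathbf{Z}_{\textrm{eq}} \in \mathcal{V}_1$, so the error event is contained in $\{\Vert\mathbf{Z}_{\textrm{eq}}\Vert_2 \geq r_{\Lambda_1}^{\textrm{pack}}\}$. Construction~A obtains $\Lambda_1$ from $\Lambda$ and a random linear code over $\mathbb{F}_p$ with $1/p$ scaling, so $r_{\Lambda_1}^{\textrm{pack}} \geq r_\Lambda^{\textrm{pack}}/p$, and the task reduces to upper bounding $\Pr\bigl(\Vert\mathbf{Z}_{\textrm{eq}}\Vert_2^2 \geq {r_\Lambda^{\textrm{pack}}}^2/p^2\bigr)$. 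For the tail bound I would first compute the per-coordinate variance of $\mathbf{Z}_{\textrm{eq}}$, which equals $\alpha^2 P_N + K(1-\alpha)^2 \sigma^2(\Lambda)$, and then use the containment $\mathcal{V}\subseteq \mathcal{B}_n(r_\Lambda^{\textrm{cov}})$ together with the shaping normalization relating $P_X$ to $r_\Lambda^{\textrm{effec}}$ to replace $\sigma^2(\Lambda)$ by $(r_\Lambda^{\textrm{cov}}/r_\Lambda^{\textrm{effec}})^2 P_X$, yielding the stated $P_{Z_{\textrm{eq}}}$. If $\mathbf{Z}_{\textrm{eq}}$ were white Gaussian with this per-coordinate variance, then $\Vert\mathbf{Z}_{\textrm{eq}}\Vert_2^2/P_{Z_{\textrm{eq}}} \sim \chi^2(n)$, immediately producing the factor $1-F_{\chi^2(n)}\bigl({r_\Lambda^{\textrm{pack}}}^2/(p^2 P_{Z_{\textrm{eq}}})\bigr)$. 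To handle the non-Gaussianity of the self-noise, I would invoke the Erez--Zamir noise-inflation lemma: for lattices simultaneously good for AWGN and quantization, the moment generating function of the uniform-on-$\mathcal{V}$ self-noise is dominated by that of a matched Gaussian up to a factor $e^{n\epsilon_1(\Lambda)}$ with $\epsilon_1(\Lambda)\to 0$ as $n\to\infty$; applying this comparison once per dither under a common Chernoff parameter yields the $e^{Kn\epsilon_1(\Lambda)}$ penalty. Capping the result at $1$ recovers the claimed bound.

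The main obstacle I anticipate is the last step. The $K$-fold MGF comparison must be executed so that the exponential penalty stays at $e^{Kn\epsilon_1(\Lambda)}$ and so that the Chernoff exponent chosen to invoke the $\chi^2$ tail remains admissible once the $K$ self-noise terms are combined with the scaled Gaussian. This is precisely the multi-source compute-and-forward argument of~\cite{compute_and_forward}, specialized here to the case of known channel coefficients and pure-sum recovery.
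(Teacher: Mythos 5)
Your proposal follows essentially the same route as the paper: the MLAN/inflated-lattice equivalence reducing the problem to codeword-independent effective noise $\alpha\mathbf{N}+(1-\alpha)\sum_i\mathbf{U}_i$, the bound $r_{\Lambda_1}^{\textrm{pack}}\geq r_\Lambda^{\textrm{pack}}/p$ from Construction~A, and a Gaussian-domination step with penalty $e^{K\epsilon_1(\Lambda)n}$ followed by the $\chi^2(n)$ tail. One correction to your final step, which you flag as the main obstacle: the lemma you need (the generalization of \cite[Lemma~8]{compute_and_forward}, going back to Erez--Zamir) is a \emph{pointwise density} domination, $f_{\mathbf{N}''}(\mathbf{x}) < e^{K\epsilon_1(\Lambda)n} f_{\mathbf{Z}}(\mathbf{x})$ with $\mathbf{Z}\sim\mathcal{N}(0,P_{Z_{eq}}\mathbf{I}_n)$, not an MGF comparison. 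With the density bound, the probability of \emph{any} error event under $\mathbf{N}''$ is at most $e^{K\epsilon_1(\Lambda)n}$ times its probability under $\mathbf{Z}$, so the exact $\chi^2$ CDF appears directly by integrating the Gaussian over the complement of the ball of radius $r_\Lambda^{\textrm{pack}}/p$; no Chernoff parameter has to be chosen or kept admissible across the $K$ self-noise terms. An MGF-only domination, taken literally, would instead yield a Chernoff-type relaxation rather than the stated $1-F_{\chi^2(n)}\left(\frac{{r_\Lambda^{\textrm{pack}}}^2}{p^2 P_{Z_{eq}}}\right)$ term, so you should phrase the lemma as a density bound; with that fix your argument matches the paper's proof.
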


\begin{proof}[Sketch of Proof]
    We rely on the following result, which is a straightforward generalization of the inflated lattice lemma (cf. \cite[Lemma~2]{achieving_capacity}, \cite[Lemmas~6~and~7]{erez2005capacity}).
    \begin{lemma}%
    \label{lem:inflated_lattice_oac}
    The channel from $\mathbf{c}_1,\dots,\mathbf{c}_K$ to $\mathbf{Y^{\prime}}$ is
    equivalent in distribution to the channel
    $
        \mathbf{Y^{\prime}} = [\sum_{i=1}^{K}\mathbf{c}_i + \mathbf{N^{\prime}}] \textrm{ mod } \Lambda
        ,$
    where
    $
        \mathbf{N^{\prime}} = [\alpha \mathbf{N} + (\alpha-1)\sum_{i=1}^{K}\mathbf{U}] \textrm{ mod } \Lambda
        .$
    \end{lemma}
    The average probability of error is $P_e = \textrm{Pr} \{ \mathbf{N^{\prime} \notin
    \mathcal{V}_1} \}$. Let $\mathbf{N^{\prime\prime}} = \alpha \mathbf{N} +
    (1-\alpha)\sum_{i=1}^{K}\mathbf{U}$. This is smaller than or equal to $\textrm{Pr} \{
    \mathbf{N^{\prime\prime} \not\in \mathcal{V}_1} \}$, since the first implies
    the second, but not vice versa. This is due to the modulo operation mapping
    any point outside $\mathcal{V}$ to a point in $\mathcal{V}$ which might or
    might not be in $\mathcal{V}_1$. The following lemma, adapted from~\cite[Lemma~8]{compute_and_forward}, bounds the density of
    $\mathbf{N}^{\prime\prime}$ from above. %
    \begin{lemma}\label{lem:eq_nested_gaussian}
    The density of %
    $\mathbf{N^{\prime\prime}}$
    can be upper bounded by the density of a Gaussian vector
    $\mathbf{Z} \sim \mathcal{N}(0, P_{Z_{eq}}\cdot\mathbf{I}_n) $ such that
    $f_{\mathbf{N^{\prime\prime}}}(\mathbf{x}) < e^{K \epsilon_1(\Lambda) n}
        f_{\mathbf{Z}}(\mathbf{x})$, with $\epsilon_1(\Lambda)$ and $P_{Z_{eq}}$ as above.
    \end{lemma}
    By the application of \cref{lem:eq_nested_gaussian}, we have
    \begin{equation*}
        P_e = \textrm{Pr} \{\mathbf{N^{\prime} \not\in \mathcal{V}_1} \} \leq \textrm{Pr} \{\mathbf{N^{\prime\prime} \not\in \mathcal{V}_1} \} \leq e^{K\epsilon_1(\Lambda) n} \textrm{Pr} \{\mathbf{\mathbf{Z} \not\in  \mathcal{V}_1} \}
    .\end{equation*}
    Now, we bound $\textrm{Pr} \{\mathbf{\mathbf{Z} \not\in \mathcal{V}_1} \}$ from above by observing that if $\mathbf{Z}$ does not leave the packing radius of the coding lattice $\Lambda_1$, it also does not leave the region $\mathcal{V}_1$. Interpreting the coding lattice as a scaled down and diluted version of the shaping lattice, we have $r_{\Lambda_1}^{\textrm{pack}} \geq r_{\Lambda}^{\textrm{pack}} / p$, since we scale down by a factor of $p$ before diluting. Then, we may upper bound $P_e$ by the probability that $\mathbf{Z}$ leaves the n-dimensional sphere
    of radius $r_{\Lambda}^{\textrm{pack}} / p$, which yields the upper bound in our Theorem.
\end{proof}

The upper bound in \cref{thm:upper_bound_nested} can be tightened by relaxing the achievable rate to sacrifice the $\frac{1}{K}$ term. In our regime of interest, where the number of clients $K$ is large, the rate loss incurred thereby gets negligible as $K \rightarrow \infty$. Instead of choosing alpha to minimize the MSE, choosing $\alpha=1$ removes the self-noise of the system incurred due to the dithers, i.e., $e^{K \epsilon_1(\Lambda)n} \rightarrow 1$. This leads to the following error bound.
\begin{corollary}\label{thm:upper_bound_nested_tight}
    The average probability of error for the nested lattice code construction with $\alpha=1$ is
    upper bounded by
    \begin{equation*}
        P_e < 1 - F_{\chi^2(n)} \left(\frac{{r_\Lambda^{pack}}^2}{p^2 P_{N}}\right).
    \end{equation*}
\end{corollary}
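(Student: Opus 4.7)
The plan is to derive the corollary as a direct specialization of \cref{thm:upper_bound_nested} to $\alpha=1$, observing that this choice makes several steps in the proof of \cref{thm:upper_bound_nested} unnecessary rather than merely cosmetic.

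First, I would substitute $\alpha=1$ into the expression $\mathbf{N}^{\prime\prime} = \alpha \mathbf{N} + (1-\alpha)\sum_{i=1}^K \mathbf{U}_i$ arising in the proof of \cref{thm:upper_bound_nested}. The dither contribution vanishes identically, so $\mathbf{N}^{\prime\prime} = \mathbf{N} \sim \mathcal{N}(\mathbf{0}, P_N \mathbf{I}_n)$ is exactly Gaussian. Consequently, the density-inflation step provided by \cref{lem:eq_nested_gaussian}, which introduces the multiplicative penalty $e^{K\epsilon_1(\Lambda)n}$, can be skipped entirely: the true effective noise and the surrogate Gaussian $\mathbf{Z}$ coincide, so the multiplicative constant collapses to one and the variance $P_{Z_{eq}}$ collapses to $P_N$.

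Next, I would reuse, verbatim, the packing-radius argument from the proof of \cref{thm:upper_bound_nested}. Starting from $P_e \leq \Pr\{\mathbf{N}^{\prime\prime} \notin \mathcal{V}_1\} = \Pr\{\mathbf{N} \notin \mathcal{V}_1\}$, I would note that $\|\mathbf{N}\|_2 \leq r_{\Lambda_1}^{\textrm{pack}}$ implies $\mathbf{N} \in \mathcal{V}_1$, and invoke $r_{\Lambda_1}^{\textrm{pack}} \geq r_\Lambda^{\textrm{pack}}/p$ to obtain $P_e \leq \Pr\{\|\mathbf{N}\|_2 > r_\Lambda^{\textrm{pack}}/p\}$. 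Since $\|\mathbf{N}\|_2^2/P_N$ is chi-squared with $n$ degrees of freedom, this probability equals $1 - F_{\chi^2(n)}\!\left({r_\Lambda^{\textrm{pack}}}^2/(p^2 P_N)\right)$, matching the claimed bound.

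There is no substantial obstacle here beyond careful bookkeeping in the specialization: the only conceptual content is the observation, already flagged in the paragraph preceding the corollary, that choosing $\alpha=1$ removes the self-noise induced by the dithers. This both eliminates the $\epsilon_1(\Lambda)$ penalty and reduces $P_{Z_{eq}}$ to $P_N$, at the price of sacrificing the $1/K$ term in the achievable rate of \cref{th:achievable_rate_nested}, which is negligible in the large-$K$ regime of interest.
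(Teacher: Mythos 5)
Your proposal is correct and follows essentially the same route the paper intends: with $\alpha=1$ the dither self-noise vanishes, so $\mathbf{N}^{\prime\prime}=\mathbf{N}$ is exactly Gaussian, \cref{lem:eq_nested_gaussian} (and hence the factor $e^{K\epsilon_1(\Lambda)n}$) is no longer needed, $P_{Z_{eq}}$ reduces to $P_N$, and the packing-radius argument $r_{\Lambda_1}^{\textrm{pack}} \geq r_\Lambda^{\textrm{pack}}/p$ together with the $\chi^2(n)$ distribution of $\Vert\mathbf{N}\Vert_2^2/P_N$ yields the stated bound. Your observation that the inflation step is skipped entirely (rather than the factor merely tending to one) is, if anything, a slightly cleaner statement of the same argument.
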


Nested lattice codes are known to have good algebraic and geometric properties and allow arbitrarily small average error probabilities as $n\rightarrow \infty$. However, increasing $K$ requires an increase in $p$, thus decreasing $\textrm{Vol}(\mathcal{V}_1)$ and increasing the effective noise. To compensate, we must increase $n$, hence, leading to a logarithmic increase of channel uses. While better than separation-based approaches where $n$ increases linearly with $K$, this drawback is not present in the construction from \cref{sec:balanced_numbers}.

\section{Comparison}
\definecolor{tableaublue}{HTML}{1F77B4}
\definecolor{tableauorange}{HTML}{FF7F0E}
\definecolor{tableaugreen}{HTML}{2CA02C}
\definecolor{tableaured}{HTML}{D62728}
\definecolor{tableaupurple}{HTML}{9467BD}
\definecolor{tableaubrown}{HTML}{8C564B}
\definecolor{tableaupink}{HTML}{E377C2}
\definecolor{tableaugrey}{HTML}{7F7F7F}
\definecolor{tableaulightgreen}{HTML}{BCBD22}
\definecolor{tableaucyan}{HTML}{17BECF}

\newcommand{\fntwozerozero}{plot_SNR_2.00_masterthesis.csv}
\newcommand{\fntwotwofive}{plot_SNR_2.25_masterthesis.csv}
\newcommand{\fntwothreefive}{plot_SNR_2.35_masterthesis.csv}

\begin{figure*}[!t]
    \centering
    \begin{minipage}{0.49\linewidth}
    \begin{tikzpicture}
        \begin{semilogyaxis}
            [
            title= {SNR=2.00 dB},
            xlabel = Number of Devices $K$,
            ylabel = Probability of Error $P_e$,
            ymax=1,
            xmin=1,
            xmax=5,
            legend pos=south east,
            legend style={font=\small},
            height=0.6\textwidth,
            width=\textwidth,            ]
            
            \addplot[mark=none, tableaugreen, dashed, thick] table [x=Num_Devices, y=Median_Nested, col sep=comma] {data/\fntwozerozero};
            \addplot[mark=none, tableaublue, thick] table [x=Num_Devices, y=Median_Balanced, col sep=comma] {data/\fntwozerozero};
            
            \addplot[name path=Max_Nested, mark=none, tableaugreen, draw=none] table [x=Num_Devices, y=Max_Nested, col sep=comma] {data/\fntwozerozero};
            \addplot[name path=Min_Nested, mark=none, tableaugreen, draw=none] table [x=Num_Devices, y=Min_Nested, col sep=comma] {data/\fntwozerozero};
            \addplot[tableaugreen, fill opacity=0.4] fill between[of=Max_Nested and Min_Nested];
            \addplot[name path=Max_Balanced, mark=none, tableaublue, draw=none] table [x=Num_Devices, y=Max_Balanced, col sep=comma] {data/\fntwozerozero};
            \addplot[name path=Min_Balanced, mark=none, tableaublue, draw=none] table [x=Num_Devices, y=Min_Balanced, col sep=comma] {data/\fntwozerozero};
            \addplot[tableaublue, fill opacity=0.4] fill between[of=Max_Balanced and Min_Balanced];
            \legend{Nested Lattice, Balanced Numerals}
        \end{semilogyaxis}
    \end{tikzpicture}
    \end{minipage}
    \begin{minipage}{0.49\linewidth}
    \begin{tikzpicture}
        \begin{semilogyaxis}
            [
            title= {SNR=2.35 dB},
            xlabel = Number of Devices $K$,
            ylabel = Probability of Error $P_e$,
            ymax=1,
            xmin=1,
            xmax=5,
            legend pos=south east,
            legend style={font=\small},
            height=0.6\textwidth,
            width=\textwidth,
            ]
            
            \addplot[mark=none, tableaugreen, dashed, thick] table [x=Num_Devices, y=Median_Nested, col sep=comma] {data/\fntwothreefive};
            \addplot[mark=none, tableaublue, thick] table [x=Num_Devices, y=Median_Balanced, col sep=comma] {data/\fntwothreefive};
           
           \addplot[name path=Max_Nested, mark=none, tableaugreen, draw=none] table [x=Num_Devices, y=Max_Nested, col sep=comma] {data/\fntwothreefive};
           \addplot[name path=Min_Nested, mark=none, tableaugreen, draw=none] table [x=Num_Devices, y=Min_Nested, col sep=comma] {data/\fntwothreefive};
           \addplot[tableaugreen, fill opacity=0.4] fill between[of=Max_Nested and Min_Nested];

           \addplot[name path=Max_Balanced, mark=none, tableaublue, draw=none] table [x=Num_Devices, y=Max_Balanced, col sep=comma] {data/\fntwothreefive};
           \addplot[name path=Min_Balanced, mark=none, tableaublue, draw=none] table [x=Num_Devices, y=Min_Balanced, col sep=comma] {data/\fntwothreefive};
           \addplot[tableaublue, fill opacity=0.4] fill between[of=Max_Balanced and Min_Balanced];

            \legend{Nested Lattice, Balanced Numerals}
        \end{semilogyaxis}
    \end{tikzpicture} 
    \end{minipage}
    \caption[Comparison of experimental probability of error.]{Comparison of experimental probability of errors of  Nested
    Lattice and Balanced Numeral codes for AirComp.}
    \label{fig:result_experimental}
\end{figure*}
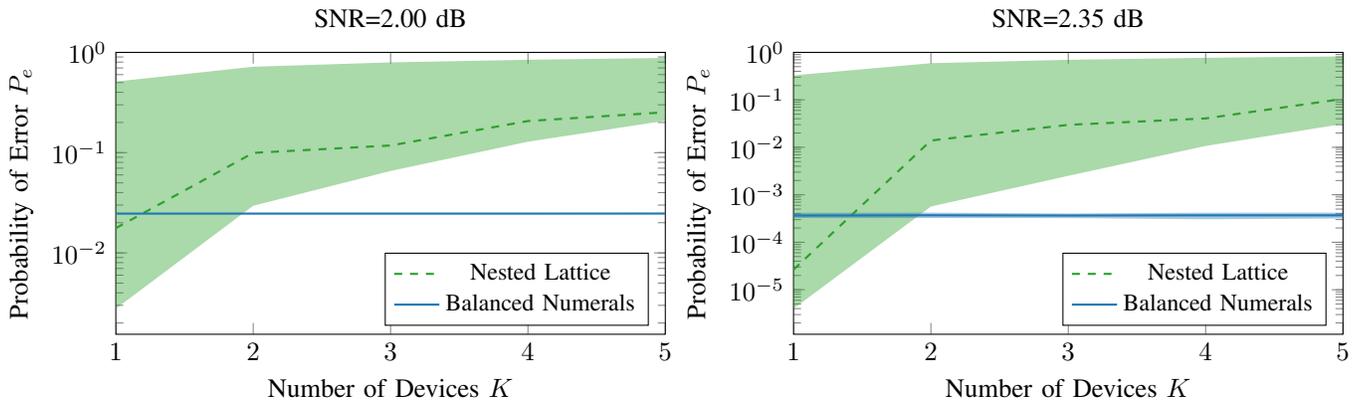

The theoretical results for the average error summarized in \cref{cor:balanced_upper_bound,thm:upper_bound_nested_tight} are upper bounds intended to observe the qualitative behavior of the constructions and are not well-suited for a quantitative comparison. 
To observe the actual behavior, we perform a numerical comparison of both code constructions, for which we set the shaping lattice $\Lambda$ as the well-known 2-dimensional hexagonal lattice scaled to meet the power constraint $P_X$. We set $q=25$ as the quantization parameter. We set $k=1$ to satisfy $n>k$, i.e. we transmit one quantized real number per device, with two uses of the AWGN channel. %
For the balanced number construction from \cref{sec:balanced_numbers}, we set $\beta=5$ and $\alpha=2$. For the coding lattice $\Lambda_1$, we generate $100$ random generator matrices and construct the corresponding codes from Section~\ref{sec:nested_lattices}. For each code construction, we generate $2\times10^6$ samples of the i.i.d. Gaussian vector $\mathbf{N}$ to experimentally evaluate the error probability.

The resulting experimental error for both schemes for per-device SNR values of $2$dB and $2.35$dB is seen in Figure~\ref*{fig:result_experimental}. We show the code constructions' minimum, maximum, and median experimental error values. As expected, the nested lattice code has a lower probability of error when there is only a single device. However, $P_e$ increases rapidly with increasing $K$. On the other hand, for the balanced numeral codes, while $P_e$ is initially higher with $K=1$, it is constant as the number of devices increases. The smaller the per-device SNR, the smaller the threshold value of $K$ above which the balanced number construction provides a lower error probability.

\section{Conclusion}

We considered the channel coding problem for AirComp for FEEL with a focus on settings with a large number of clients. We introduced a lattice-based code construction based on a balanced number system that provides competitive error rates that are stable in the number of clients. We showed superior performance compared to the well-studied nested lattice codes when the number of devices is large and gave analytical upper bounds on the average block error probabilities. Tightening the theoretical bounds for both constructions and finding a code construction with equally good properties and a provable asymptotic rate is left for future research.

\bibliographystyle{IEEEtran}
\bibliography{refs}

\end{document}